\newtheorem{theorem}{Theorem}
\newtheorem{prop}{Proposition}
\newtheorem{lemma}{Lemma}
\title{A Countable-Type Branching Process Model for the Tug-of-War Cancer Cell Dynamics}
\author{
  Ren-Yi Wang$^{1}$ and Marek Kimmel$^{2}$ \\
  $^{1,2}$\text{Department of Statistics, Rice University, Houston, TX, 77005, USA} \\
  $^{2}$\text{Department of Systems Biology and Engineering, Silesian University of Technology,}\\ \text{ Akademicka 16, 44-100 Gliwice, Poland} \\
  $^{1}$\texttt{rw47@rice.edu} \\
}
\begin{document}
\maketitle

\begin{abstract}
We consider a time-continuous Markov branching process of proliferating cells with a countable collection of types. Among-type transitions are inspired by the Tug-of-War process introduced by \cite{mcfarland2014tug} as a mathematical model for competition of advantageous driver mutations and deleterious passenger mutations in cancer cells. We introduce a version of the model in which a driver mutation pushes the type of the cell $L$-units up, while a passenger mutation pulls it $1$-unit down. The distribution of time to divisions depends on the type (fitness) of cell, which is an integer. The extinction probability given any initial cell type is strictly less than $1$, which allows us to investigate the transition between types (type transition) in an infinitely long cell lineage of cells. The analysis leads to the result that under driver dominance, the type transition process escapes to infinity, while under passenger dominance, it leads to a limit distribution. Implications in cancer cell dynamics and population genetics are discussed.
\end{abstract}

\noindent \textbf{Keywords}: Multitype branching process $|$  cancer dynamics $|$ negative selection $|$  deleterious passenger mutations $|$ Tug-of-War

\section{Introduction}
Cancerous tumors are believed to be most frequently initiated by a single mutated cell that generates a population of cells (a tumor) with elevated mutation rates and genome instability. Mutations are categorized into two types: driver and passenger mutations. Driver mutations confer selective advantages to the cell by increasing its fitness and passenger mutations were viewed as neutral. Recent theoretical and experimental studies challenge the hypothesis of neutrality of passenger mutations. Theory in \cite{mcfarland2014tug} and experiments in \cite{mcfarland2017damaging} suggest passenger mutations are mildly deleterious and they inhibit tumor growth and reduce metastasis when passenger load exceeds a threshold.
	
In this paper, we analyze and simulate a branching process model with countably many types inspired by the original model in \cite{mcfarland2014tug} that captures the experimental outcomes in \cite{mcfarland2017damaging} but is simple enough for rigorous analysis. In our model, rates for driver and passenger mutations are $\mu$ and $\nu$, respectively. The type of a cell is a monotonic function of its fitness, which is integer-valued. Given any initial cell type, we show global extinction probabilities coincides with partial extinction probabilities, and they are all strictly less than $1$. The fact that extinction does not always occur allows us to investigate the transition between cell types along a non-extinct lineage, which provides insights into the average fitness of the cell population as well as the interaction between negative selection and deleterious passenger mutations. We discover a dichotomy of the existence of a limiting distribution based on the inequality $\nu \leq \mu L$, where $L$ depends on the selection coefficients for driver and passenger mutations. When $\nu$ does not exceed the threshold $\mu L$, the type transition process is transient and a limiting distribution does not exist. In this case, we say the process is driver dominant. On the other hand, when $\nu$ exceeds the threshold $\mu L$, type transition process admits a limiting distribution and we say the process is passenger dominant. 

\section{Model Setup} 
In this section, we present the derivation of our model and motivate simplifications that lead to the form of the model we analyze. Mathematical details regarding construction of the model are deferred to Appendix $\ref{model_construction}$.

\subsection{Continuous-time Markov Chain} \label{section 2.1}
The birth rate (fitness) and death rate of a cell with $n_{d}$ drivers and $n_{p}$ passengers are

\begin{align*}
    b(n_{d},n_{p}) = b_{0}\frac{(1+s_{d})^{n_{d}}}{(1+s_{p})^{n_{p}}},\;\; d(n_{d},n_{p}) = d.
\end{align*}
    
\noindent The tumor starts from a single cancer cell with birth rate $b_{0} > d$ (type $(0,0)$ cell), where $d$ is assumed to be the death rate of the cell. Without loss of generality, we may take $d = 1$. $s_{d}$ and $s_{p}$ are selection coefficients for driver and passenger mutations and since passenger mutations are only mildly deleterious, $s_{d} > s_{p}$. Let $\mu$ denote the driver mutation rate and $\nu$ denote the passenger mutation rate $(\mu < \nu)$. It appears reasonable to assume $d > \nu$ since mutations are rare events. Numerical values for mutation rate and death rate supporting the assumption $d>\nu$ can be found in \cite{durrett2015branching} and \cite{mcfarland2014tug}. The evolution of tumor cell population is modeled as a Markov branching process with type space $\mathbb{Z}_{+}\times\mathbb{Z}_{+}$. The lifetime of a type $(n_{d},n_{p})$ cell is exponentially distributed with rate $\delta(n_{d},n_{p})=b(n_{d},n_{p})+\mu+\nu+d$. At the end of the cell's lifetime, it can take four actions: division (producing two $(n_{d},n_{p})$ cells), acquiring a driver mutation (producing a $(n_{d}+1,n_{p})$ cell), acquiring a passenger mutation (producing a $(n_{d},n_{p}+1)$ cell), and death. The four actions occur with probabilities $b(n_{d},n_{p})/\delta(n_{d},n_{p}),\mu/\delta(n_{d},n_{p}),\nu/\delta(n_{d},n_{p})$, and $d/\delta(n_{d},n_{p})$, respectively. The proliferation scheme for the Markov branching process is presented below with the rates of four actions. \\
    
\begin{align*}
&\begin{forest}
[$(n_{d}\mid n_{p})$
    [$(n_{d}\mid n_{p})$]
    [$(n_{d}\mid n_{p})$]
]
\end{forest}
&&\begin{forest}
[$(n_{d}\mid n_{p})$
    [$(n_{d}+1\mid n_{p})$]
]
\end{forest}
&&\begin{forest}
[$(n_{d}\mid n_{p})$
    [$(n_{d}\mid n_{p}+1)$]
]
\end{forest}
&&\begin{forest}
[$(n_{d}\mid n_{p})$
    [$\emptyset$]
]
\end{forest} \\
\text{Rate: } &\quad\qquad b(n_{d},n_{p}) &&\quad\qquad \mu &&\quad\qquad \nu &&\qquad d
\end{align*}

\subsubsection{Reduced Process: Driver-Passenger Relation} \label{reduction}
In the remainder, we will analyze the process under an additional hypothesis that a single driver mutation's effect on fitness can be cancelled by $L$ passenger mutations, that is, $1+s_{d} = (1+s_{p})^{L}$, or

\begin{equation}
\begin{aligned}
    \log_{(1+s_{p})}(1+s_{d})=L \in \mathbb{N}, L \geq 2.
\end{aligned} 
\label{eq:dp_relation_man}
\end{equation}

\noindent This simplification seems a step in the right direction since it is always possible to take the floor or ceiling of $\log_{(1+s_{p})}(1+s_{d})$ to obtain an approximation of the process. Therefore, $b(n_{d},n_{p}) = b_{0}(1+s_{p})^{n_{d}L-n_{p}} = b_{0}(1+s_{p})^{i} = b(i)$ and type $i$ is now defined by $i=n_{d}L-n_{p}$. This driver-passenger relation transforms the type space from $\mathbb{Z}_{+} \times \mathbb{Z}_{+}$ to $\mathbb{Z}$ which is easier to work with. Under this simplification, the proliferation scheme becomes

\begin{align*}
&\begin{forest}
[$(i)$
    [$(i)$]
    [$(i)$]
]
\end{forest}
&&\begin{forest}
[$(i)$
    [$(i+L)$]
]
\end{forest}
&&\begin{forest}
[$(i)$
    [$(i-1)$]
]
\end{forest}
&&\begin{forest}
[$(i)$
    [$\emptyset$]
]
\end{forest} \\
\text{Rate: } &\quad b(i) &&\quad \mu &&\quad \nu &&\quad d
\end{align*}

\noindent A mild condition $(\nu + d > \mu L)$ in Lemma \ref{lemma1} guarantees the process is non-explosive. To compute and analyze extinction probabilities, we focus on the embedded branching process in the next section. Corollary of Lemma \ref{lemma1} suggests that under non-explosion, continuous-time process becomes extinct if and only if the embedded process does.

The utility of this simplification will be demonstrated in section \ref{section 3.4}, where we present a dichotomy of existence of a limiting distribution for the type transition process based on the true value of $\nu >\mu L$.

\subsection{Embedded Branching Process}
In the embedded branching process $(\mathbf{E}_{n})_{n\geq 0}$, random lifetimes are replaced by constant time units equivalent to single generations. Let $\delta(i)=b(i)+\mu+\nu+d$; then the probabilities of a type $i$ cell dividing, acquiring a driver, acquiring a passenger, or dying are equal to $b(i)/\delta(i), \mu/\delta(i), \nu/\delta(i)$, and $d/\delta(i)$, respectively.
    
\begin{align*}
&\begin{forest}
[$(i)$
    [$(i)$]
    [$(i)$]
]
\end{forest}
&&\begin{forest}
[$(i)$
    [$(i+L)$]
]
\end{forest}
&&\begin{forest}
[$(i)$
    [$(i-1)$]
]
\end{forest}
&&\begin{forest}
[$(i)$
    [$\emptyset$]
]
\end{forest} \\
\text{Prob: } &\quad\frac{b(i)}{\delta(i)} &&\quad \frac{\mu}{\delta(i)} &&\quad \frac{\nu}{\delta(i)} && \frac{d}{\delta(i)} 
\end{align*}

The mean matrix $M$ has nonzero entries in the $i$th row being

\begin{align*}
    &M_{i,i-1} = \frac{\nu}{\delta(i)}, M_{i,i} = 2\frac{b(i)}{\delta(i)}, M_{i,i+L} = \frac{\mu}{\delta(i)}; \forall i\in\mathbb{Z}.
\end{align*}

\noindent In Proposition $\ref{prop1}$, we use the convergence parameter of $M$ to analyze the relation between global and partial extinction probabilities. 

\section{Results}
    
We present the analysis of the embedded process in this section along with simulations and computations. Table \ref{table:1} contains a list  of parameters for simulations. Note the initial birth rate for the passenger dominance case is higher than that for driver dominance case. This is for simulation purposes, to provide enough surviving lineages. 
    
\begin{table}[!ht]
\centering
\begin{tabular}{|p{1.6cm}|p{.8cm}|p{.8cm}|p{.8cm}|p{.8cm}|p{.8cm}|p{.8cm}|}
    \hline
    \multicolumn{7}{|c|}{Parameter Specifications} \\
    \hline
    Parameters &$b_{0}$ &$\mu$ &$\nu$ &$s_{p}$ &$L$ &$\frac{\nu}{\mu L}$ \\
    \hline 
    Spec. $D1$ &1.1  &0.0251 &0.05 &0.002 &2  &$\leq 1$ \\ 
    Spec. $D2$ &1.1  &0.011  &0.05 &0.002 &5  &$\leq 1$ \\
    Spec. $D3$ &1.1  &0.0051 &0.05 &0.002 &10 &$\leq 1$ \\ 
    Spec. $P1$ &1.15 &0.055  &0.3  &0.002 &2  &$> 1$ \\ 
    Spec. $P2$ &1.15 &0.02   &0.3  &0.002 &5  &$>1$ \\ 
    Spec. $P3$ &1.15 &0.003  &0.3  &0.002 &10 &$>1$ \\
    \hline
\end{tabular}
\caption{Parameter specifications for simulations. The $D$ or $P$ after specification indicates driver or passenger dominance.}
\label{table:1}
\end{table}

\subsection{Extinction Probabilities}

For a branching process with infinitely many types, there are two modes of extinction. Global extinction is the event of the entire population eventually becoming extinct and partial extinction is the event that all types will eventually become extinct. For a precise mathematical definition, define $\mathbf{q}$ as the vector of global extinction probabilities conditional on initial types and $\tilde{\mathbf{q}}$ to be the vector of partial extinction probabilities conditional on initial types. Let $\underline{e}_{i}$ be the bi-infinite vector indexed by $\mathbb{Z}$ whose entries are all zeros except for the $i$th entry being one. Hence, $\{\mathbf{E}_{0}=\underline{e}_{i}\}$ indicates the population is initiated by a type $i$ cell. Let $(\mathbf{E}_{n})_{k}$ denote the number of individuals of type $k$ in the $n$th generation, then
    
\begin{align*}
    &q_{i} = \mathbb{P}(\lim_{n\to\infty}||\mathbf{E}_{n}||_{\ell^{1}}=0 \mid \mathbf{E}_{0}=\underline{e}_{i});\\
    \;\;&\tilde{q}_{i} = \mathbb{P}(\forall k\in\mathbb{Z}, \lim_{n\to\infty}(\mathbf{E}_{n})_{k}=0 \mid \mathbf{E}_{0}=\underline{e}_{i}).
    \end{align*}
    
The probability generating function (PGF) of the progeny distribution is
    
\begin{align*}
    &\mathbf{G}(\mathbf{s})=(\cdots,G_{-1}(\mathbf{s}),G_{0}(\mathbf{s}),G_{1}(\mathbf{s}),\cdots),\\
    &\text{where }G_{i}(\mathbf{s}) = \frac{d}{\delta(i)}+\frac{\nu}{\delta(i)}s_{i-1}+\frac{b(i)}{\delta(i)}s_{i}^{2}+\frac{\mu}{\delta(i)}s_{i+L}.
\end{align*}
    
\noindent By Theorem $3.1$ in \cite{moyal1964multiplicative}, $\mathbf{q}$ is always the minimal non-negative fixed point of the PGF. Hence,
    
\begin{align*}
    q_{i} &= \frac{d}{\delta(i)}+\frac{\nu}{\delta(i)}q_{i-1}+\frac{b(i)}{\delta(i)}q_{i}^{2}+\frac{\mu}{\delta(i)}q_{i+L}.
\end{align*}
    
\noindent As we demonstrate in Theorem \ref{thm1} via a coupling argument, population initiated with a cell with greater fitness is less likely to become extinct, that is, $i>j$ implies $q_{i}\leq q_{j}$. Therefore, $\lim_{i\to\infty}q_{i} = q_{\infty}$ and $\lim_{i\to-\infty}q_{i}=q_{-\infty}$ exist by monotonicity and boundedness. In addition, Proposition \ref{prop1} shows that $\mathbf{q}<\underline{1}$. As a consequence,
	
\begin{align*}
    &q_{\infty} = q_{\infty}^{2} \Rightarrow q_{\infty} = 0 \\
    &q_{-\infty} = \frac{d+\nu q_{-\infty} + \mu q_{-\infty}}{\mu+\nu+d} \Rightarrow q_{-\infty} = 1.
\end{align*}
    
Explicit expression for $q_{i}$ is difficult to find since the difference equation is quadratic, inhomogeneous, and varying for each $i$. We resort to an algorithm in \cite{hautphenne2013extinction} to obtain an approximation of the extinction probability. The algorithm can be applied to the ``doubly''-infinite type space such as the set of all integers $\mathbb{Z}$. Using notation of \cite{hautphenne2013extinction}, let $q_{0}^{(k)}$ be the probability that the process becomes extinct before reaching types in the set $\{i:\;i>k\}\cup\{i:\;i< -k\}$ (taboo types). Let $T_{e}$ be the time of extinction and $\tau_{k}$ be the time of first arrival into the set of taboo types, then 
    
\begin{align*}
    q_{0}^{(k)} = \mathbb{P}(T_{e}<\tau_{k}\mid \mathbf{Z}(0)=\underline{e}_{0}).
\end{align*}
    
\noindent It holds that $\lim_{k\to\infty}\tau_{k}=\infty$ almost surely and Lemma $3.1$ in \cite{hautphenne2013extinction} shows that $q_{0}^{(k)}$ can be used to approximate the extinction probability since $q_{0}^{(k)} \to q_{0}$ as $k\to\infty$. Analogously, $q_{i}$'s can be computed by shifting the set of taboo types. 

Figure \ref{extinction probabilities} contains extinction probabilities with different initial types. The extinction probabilities tend to $1$ as initial types tend to $-\infty$ and the extinction probabilities decreases monotonically as types tend to $\infty$. Notice that as type $i\to-\infty$, there are abrupt increments in extinction probabilities for passenger dominance cases. This phenomenon can be explained by negative selection imposing a ``barrier'' for downward drift of cell fitness.

    \begin{figure}[h]
        \centering
        \includegraphics[width=\textwidth]{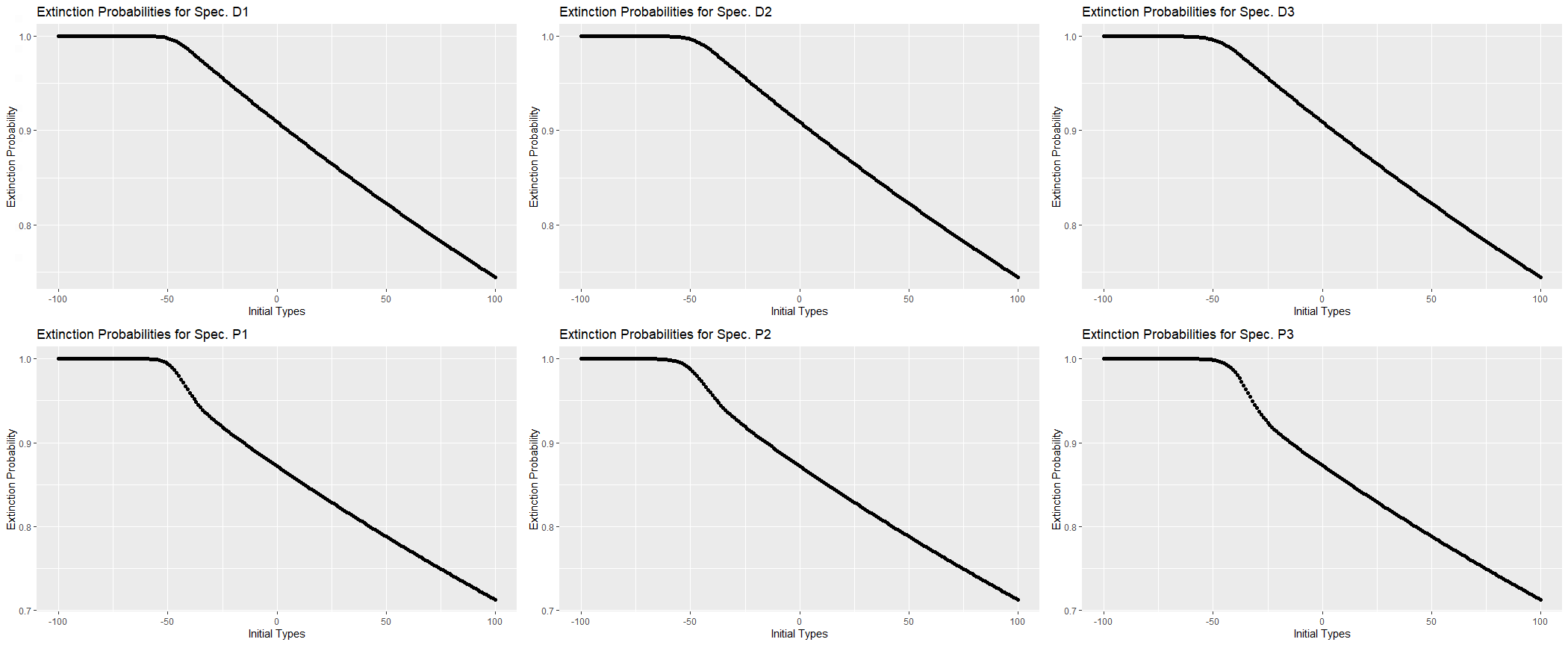}
        \caption{Computed extinction probabilities computed using the algorithm in \cite{hautphenne2013extinction} section $3.1$. The set of taboo types used to compute $q_{i}$ is $\mathbb{Z}\setminus\{i-100L,\cdots,i+100L\}$.}
        \label{extinction probabilities}
    \end{figure}

\subsection{Type Transition Process}\label{section 3.4}
	
Since we can only observe non-extinct populations, it makes sense to investigate the process conditional on non-extinction. In this section, we focus on a single non-extinct lineage and model the transition between types on the evolutionary path. This corresponds to investigation of transition between types in the backbone of the branching process. In \cite{durrett2015branching}, the backbone of a branching process consists of individuals that have descendants for infinitely many generations. This is feasible since we have shown extinction probabilities are less than $1$. Let $T_{i,j}$ denote the probability of a type $i$ cell transit to a type $j$ cell in the next generation conditional on non-extinction, then
	
\begin{align*}
   T_{i,i-1} = \frac{\nu}{\delta(i)}\frac{1-q_{i-1}}{1-q_{i}}; T_{i,i} = \frac{b(i)}{\delta(i)}\frac{1-q_{i}^{2}}{1-q_{i}}; T_{i,i+L} = \frac{\mu}{\delta(i)}\frac{1-q_{i+L}}{1-q_{i}}.
\end{align*}

\noindent We present a more detailed derivation of transition probabilities. Denote the initial population as $I$ and the backbone set associated with the initial population as $\mathcal{B}_{I}$. Define $ty(\cdot)$ to be a multiset-valued function that outputs the types of a collection of cells. For instance, $ty(I) = \{i,i,j\}$ if the $I$ contains two type $i$ cells and one type $j$ cell. Define $I'$ to be the set of first generation cells and $\mathbb{P}_{S}$ to be the probability measure such that $\mathbb{P}_{S}(ty(I)=S)=1$, where $S$ is a multiset. Since an individual's presence in the backbone implies all its ancestors’ {presence in the backbone, we have

\begin{align*}
    T_{i,i-1} &= \frac{\mathbb{P}_{\{i\}}(I\subset \mathcal{B}_{I}; I'\not\subset\mathcal{B}_{I}^{c}; ty(I') = \{i-1\})}{\mathbb{P}_{\{i\}}(I \subset \mathcal{B}_{I})} \\
    &= \frac{\mathbb{P}_{\{i\}}(I'\not\subset\mathcal{B}_{I}^{c}; ty(I') = \{i-1\})}{\mathbb{P}_{\{i\}}(I \subset \mathcal{B}_{I})} \\
    &= \frac{\mathbb{P}_{\{i\}}(ty(I') = \{i-1\})\mathbb{P}_{\{i-1\}}(I\subset \mathcal{B}_{I})}{\mathbb{P}_{\{i\}}(I \subset \mathcal{B}_{I})} \\
    &= \frac{\frac{\nu}{\delta(i)}(1-q_{i-1})}{1-q_{i}}.
\end{align*}
$T_{i,i+L}$ can be derived analogously. For $T_{i,i}$, we have
\begin{align*}
    T_{i,i} &= \frac{\mathbb{P}_{\{i\}}(I\subset \mathcal{B}_{I}; I'\not\subset\mathcal{B}_{I}^{c}; ty(I') = \{i,i\})}{\mathbb{P}_{\{i\}}(I \subset \mathcal{B}_{I})} \\
    &= \frac{\mathbb{P}_{\{i\}}(I'\not\subset\mathcal{B}_{I}^{c}; ty(I') = \{i,i\})}{\mathbb{P}_{i}(I \subset \mathcal{B}_{I})} \\
    &= \frac{\mathbb{P}_{\{i\}}(ty(I') = \{i,i\})\mathbb{P}_{\{i,i\}}(I \not\subset \mathcal{B}_{I}^{c})}{\mathbb{P}_{\{i\}}(I \subset \mathcal{B}_{I})} \\
    &= \frac{\frac{b(i)}{\delta(i)}(1-q_{i}^{2})}{1-q_{i}}.
\end{align*}

\noindent To show they sum to $1$, recall $\mathbf{G}(\mathbf{q}) = \mathbf{q}$.
	
\begin{align*}
    q_{i} &= \frac{d}{\delta(i)}+\frac{\nu}{\delta(i)}q_{i-1}+\frac{b(i)}{\delta(i)}q_{i}^{2}+\frac{\mu}{\delta(i)}q_{i+L} \\
    \Rightarrow 1 &= \frac{\nu}{\delta(i)}\frac{1-q_{i-1}}{1-q_{i}}+\frac{b(i)}{\delta(i)}\frac{1-q_{i}^{2}}{1-q_{i}}+\frac{\mu}{\delta(i)}\frac{1-q_{i+L}}{1-q_{i}}.
\end{align*}
	
To construct the type transition process on the continuous-time branching process, denoted $(X_{t})_{t\in\mathbb{R}_{+}}$, we construct its jump chain $(Y_{n})$ first. Since our model is initiated with a single type-$0$ cell, we have $X_{0} = Y_{0} = 0$. The transition probabilities for the jump chain $(Y_{n})$ are derived by normalizing $T_{i,i-1}$ and $T_{i,i+L}$. Therefore, the jump chain has transition probability matrix $J$ of the form
	
\begin{align*}
   J_{i,i-1} = \frac{\nu}{\delta(i)(1-T_{i,i})}\frac{1-q_{i-1}}{1-q_{i}};\;\; J_{i,i+L} = \frac{\mu}{\delta(i)(1-T_{i,i})}\frac{1-q_{i+L}}{1-q_{i}}.
\end{align*}

\noindent The distribution of the holding time for state $i$ can be represented as a random sum
	
\begin{align*}
    &\sum_{k=1}^{N_{i}}E_{i,k}, \text{ where } E_{i,k} \stackrel{IID}{\sim} \mathrm{Exp}(\delta(i)),\\
    \;\;&\text{where } N_{i} \sim Geom(1-T_{i,i}) \text{ on } \{1,2,\cdots\} \\
    &\text{and } \{N_{i}, E_{i,1},E_{i,2},\cdots\}\; \text{is an independent set of random variables}.
\end{align*}
	
\noindent The distribution of this random sum follows an $\mathrm{Exp}(\delta(i)(1-T_{i,i}))$ distribution. In Corollary of Lemma $\ref{lemma2}$, we show the supremum of $\delta(i)(1-T_{i,i})$ is bounded to conclude the process is non-explosive. 
	
Conditional increments for the jump chain of the type transition processes is

\begin{equation}
\begin{aligned}
    &\mathbb{E}(Y_{n+1}\mid Y_{n}=i) - i = \frac{\mu L(1-q_{i+L}) -\nu(1-q_{i-1})}{\delta(i)(1-q_{i})(1-T_{i,i})}.
\end{aligned}
\label{eq:cond_incre}
\end{equation}

\noindent By investigating the conditional increment of the jump chain, we arrive at a criterion to categorize them.
	
\subsubsection{Driver Dominance $(\nu\leq \mu L)$}

\begin{figure*}[h]
\centering
\includegraphics[width=\textwidth]{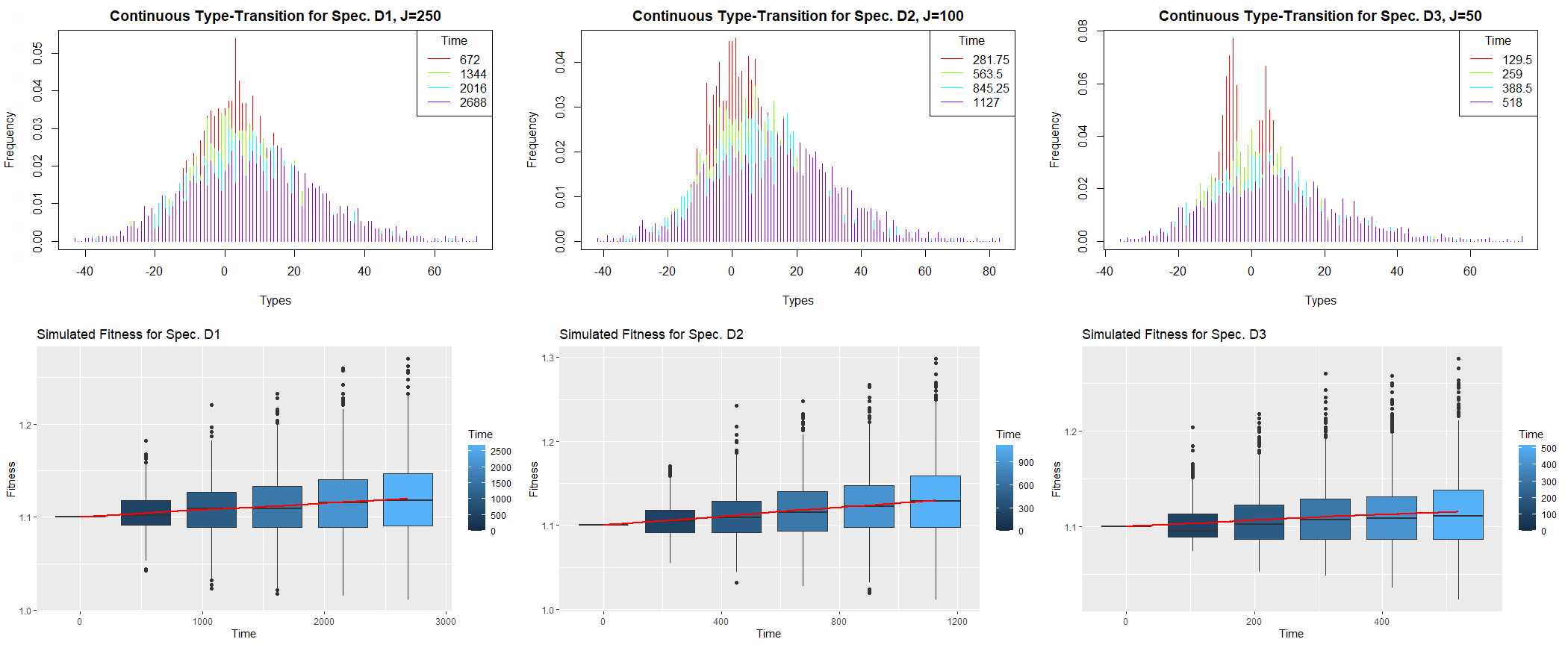}
\caption{Type transition process simulations with fitness boxplots and average fitness curves under driver dominance. We conduct $1500$ simulations for each parameter specification and $J$ is the number of jumps in each simulation. $J$ is chosen to be large enough to observe the limiting behavior of the type transition process. Simulations within a parameter specification have identical number of jumps and the final time displayed in the last row of legend is the minimal terminal time of all $1500$ simulations. The terminal times are truncated to avoid long decimal part.}\label{driver dominance}
\end{figure*}

Let us consider the numerator of the increment in \eqref{eq:cond_incre},
	
\begin{align*}
    &\mu L(1-q_{i+L}) -\nu(1-q_{i-1}) > 0, \forall i \in\mathbb{Z} \\
    \iff &\frac{\nu}{\mu L} < \frac{1-q_{i+L}}{1-q_{i-1}}, \forall i \in \mathbb{Z} \\
    \iff &\frac{\nu}{\mu L} \leq 1.
\end{align*}
	
\noindent Accordingly, both $(Y_{n})$ and $(X_{t})$ are submartingales when $\nu \leq \mu L$. Under driver dominance ($\nu \leq \mu L$), $\mathbb{E}(Y_{n})$ and $\mathbb{E}(b(Y_{n}))$ diverge to infinity due to to Lemma \ref{lemma3}. Analogously, $\mathbb{E}(X_{t})$ and $\mathbb{E}(b(X_{t}))$ both diverge to infinity as well. $\mathbb{E}(b(X_{t}))$ is the average fitness at time $t$.
	
According to Theorem \ref{thm2}, both $(Y_{n})$ and $(X_{t})$ are transient. A sufficient condition for transience given in the proof is $q_{i} \leq d/b(i)$ for all $i\in\mathbb{Z}$, which has a probabilistic interpretation. Since $\min\{d/b(i),1\}$ is the extinction probability of a cell population initiated by a type $i$ cell with $\mu = \nu = 0$, the condition suggests that under driver dominance, extinction is less likely to occur.

Due to transience, there are no limiting distributions for $(X_{t})$. According to Figure \ref{driver dominance}, the simulated type transition process becomes more and more diffuse in the positive direction as time unfolds. In addition, the simulated average fitness has a monotonic trend and diverges to $\infty$.
	
\subsubsection{Passenger Dominance $(\nu > \mu L)$}

\begin{figure*}[h]
\centering
\includegraphics[width=\textwidth]{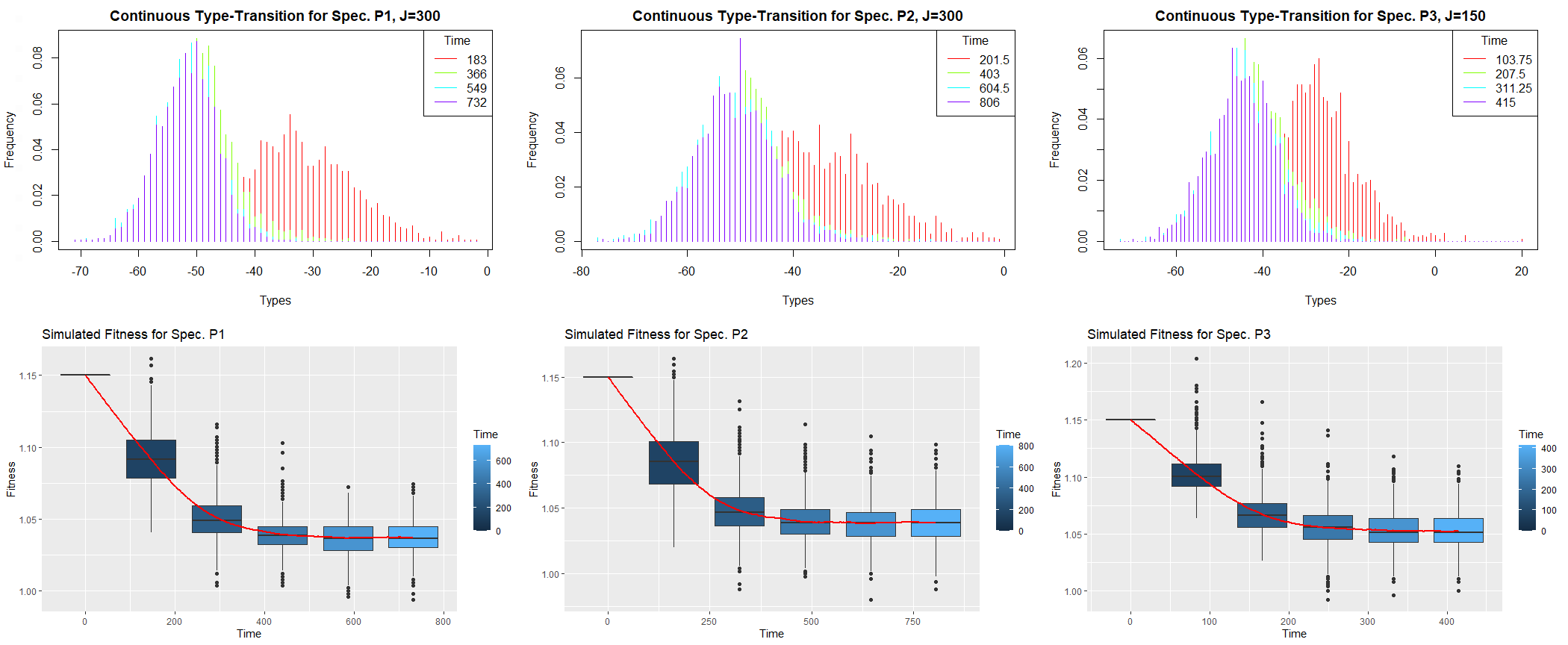}
\caption{Type transition process simulations with fitness boxplots and average fitness curve under passenger dominance. We conduct $1500$ simulations for each parameter specification and $J$ is the number of jumps in each simulation. $J$ is chosen to be large enough to observe the limiting behavior of the type transition process. Simulations within a parameter specification have identical number of jumps and the final time displayed in the last row of legend is the minimal terminal time of all $1500$ simulations. The terminal times are truncated to avoid long decimal part.}\label{passenger dominance}
\end{figure*}
	
Results in this section are based on Lemma \ref{lemma2}, which guarantees that the process cannot be a supermartingale if $\nu > \mu L$. Moreover, there exists an index $I$ such that $\mathbb{E}(Y_{n+1}\mid Y_{n}=i) > i$ for all $i< I$. Therefore, the average fitness will eventually stabilize, as shown in Figure \ref{passenger dominance}.

The index $I$ can be viewed as the location of the drift-barrier. Computed values for $I$ are correspondingly -53, -54, and -47, for specifications P1, P2, and P3 and they correspond to the abrupt changes in extinction probability in Figure \ref{extinction probabilities}. 
	
By Theorem \ref{thm3}, $(X_{t})$ admit a limiting distribution when $\nu > \mu L$. Simulations in Figure \ref{passenger dominance} show the distribution of the type transition process moves toward left and eventually stabilizes.

\newpage

\subsubsection{Implication of the Original Process}
Define $\lceil\log_{(1+s_{p})}(1+s_{d})\rceil = \Bar{L}$ and $\lfloor\log_{(1+s_{p})}(1+s_{d})\rfloor=\underline{L}$. We can approximate the original process from above or below by setting $L = \Bar{L}$ or $L = \underline{L}$. Since $\mu$ is small relative to $\nu$, $\mathbbm{1}_{\{\nu >\mu \Bar{L}\}}$ is likely to be the same as $\mathbbm{1}_{\{\nu >\mu \underline{L}\}}$. Therefore, for a wide range of mutation rates $\mu$ and $\nu$, driver/passenger dominance can be identified and existence of the limiting distribution for type transition can be inferred.  \\

\section{Discussion}
    
Cancer cells are believed to have elevated mutation rates. Mutations may be advantageous, neutral, or deleterious. In cancer, the relatively rare advantageous mutations are named ``drivers'', while the more frequent selectively neutral or slightly deleterious mutations are named ``passengers''. This convenient conceptual dichotomy prevails in biomedical and mathematical biology literature,  including \cite{mcfarland2014tug,mcfarland2017damaging}. Specifically, \cite{mcfarland2014tug} introduces the tug-of-war interaction between driver and passenger mutations which we analyze in the current paper, in the branching process framework. Since our model does not take environmental pressure in to account (fixed death rate), it is more suitable in modelling cancers with mild spatial constraints such as leukemia or other blood cancers. But what might be the threshold size? \cite{dang2003growth} document among other that 1 ccm of tumor tissue includes only $10^7 - 10^9$ tumor cells, which means that even quite large tumors may not experience severe constraints. Indeed, in biomathematical literature, this has been known for several decades, as illustrated by Figure $9$ in \cite{afenya1996acute} that suggests leukemic cells population maintains exponential growth for $700$ days to the order of $10^{12}$ cells. Returning to empirical estimates \cite{dinh2021predicting}, argue that acute myeloid leukemia cell population maintains exponential growth for around two years after therapy until reaching $10^{10} - 10^{11}$ cells. Our model might also be useful in modelling solid tumor progression if exponential growth is not violated. For instance, in \cite{ling2015extremely}, the site frequency spectrum of a large hepatocellular carcinoma tumor (slice of the tumor having 3.5 cm in diameter) is obtained from experimental data. The rigorous test of the observed site frequency spectrum of mutations in this specimen (sampled in ca. 300 sites) does not indicate a significant departure from the theoretical spectrum (Theorem $2$ in \cite{durrett2013population}) derived under the assumption of exponential growth.
    
In the present work, we remove part of the complexity of the original problem, by converting the two-dimensional type-space walk of each individual cell state $(n_d,n_p)\in\mathbb{Z}_+^2$ into a one-dimensional type-space random walk over $i\in\mathbb{Z}$ along the trajectories in which a new driver mutation corresponds to a jump up of $L$ steps, while a new passenger mutation corresponds to a jump down of $1$ step, where $L$ satisfies the relationship in \eqref{eq:dp_relation_man}. This leads us to a denumerable-type branching process of the type similar to the models in Chapter 8 of \cite{kimmel2015branching}, mathematically developed in \cite{hautphenne2013extinction}. Using results from these works, as well as those of \cite{seneta2006non}, we show that under driver dominance, the process is transient (``escapes to infinity''), while under passenger dominance, it is positive recurrent and leads to a limit distribution, given the infinitely long lineage.

In the context of population genetics, negative selection kicks in when passenger mutation rate $\nu$ exceeds a certain threshold $\mu L$ by imposing a barrier in the type transition process. The existence of this barrier guarantees the existence of a limiting distribution and stabilization of average fitness. This phenomenon, known as the ``drift barrier'' has been studied for a number of years in different population genetics models \cite{bachtrog2004adaptive, good2014deleterious, good2012distribution, rouzine2003solitary}. The difference between the classical drift barrier and the effects we observe is that we see stability at the bottom end of the state ladder, while the classical authors observe a reverse effect that slows down the upward movement towards perfect adaptation.

Under downward-trend regime (passenger dominance) there is with positive probability a ``reservoir'' of cells with a wide range of types (fitnesses). In the context of cancer a cell population, this may mean that ``indolent'' cancer cell colonies may allow the biological process to rebound if conditions change, as in the ``punctuated equilibria'' theory of cancer evolution in \cite{gao2016punctuated} and \cite{davis2017tumor}.
 
The current model is an alternative version of the models used by \cite{kurpas2022modes} to obtain, computationally, the site frequency spectra (SFS) of a range of types of human tumors. However, those models were framed in the terms of a Moran process with constant population size. This other setting allows us to model a form of environmental constraints, as the competitive malignant clones are gradually suppressing normal tissue (see Figure 15 in \cite{kurpas2022modes}); however differently from the saturation mechanism assumed in \cite{mcfarland2014tug}. The Moran-process Tug-of-War was also used by \cite{kurpas2022genomic} to model evolution of consecutive waves of viral infections. In that setting, the individuals are not cells but viral genomes. Consecutive waves of virus variants modeled by the Tug-of-War, depicted in Figure 15 in \cite{kurpas2022genomic}, are qualitatively similar to the data-based version in their Figure 7. Recently, a paper was published by \cite{bobrowski2023moran} which considers the Tug-of-War process, using a different approach based on operator semigroup theory. The results of that paper do not overlap with our current work.

\section{Acknowledgements}
We thank Professor Peter Braunsteins and Professor Sophie Hautphenne for suggesting the coupling proof in Theorem 1. Ren-Yi Wang was supported by the Department of Statistics at Rice University and by the NIH grant R01 268380 to Wenyi Wang. Marek Kimmel was supported in part by the NSF DMS RAPID COLLABORATION grant 2030577 to  Marek Kimmel and Simon Tavar\'e, and by the National Science Center (Poland) grant number 2021/41/B/NZ2/04134.

\clearpage

\appendix
\section{Model Construction}\label{model_construction}
\textbf{Assumptions: } $s_{d}>s_{p}$, $\mu < \nu < d$. 

The type of the branching process is determined by the number of driver and passenger mutations $(n_{d},n_{p})$. A type $(n_{d},n_{p})$ cell has fitness $b_{0}(1+s_{d})^{n_{d}}/(1+s_{p})^{n_{p}}$. We construct the model under the following driver-passenger relation.

\begin{equation}
\begin{aligned}
    \log_{(1+s_{p})}(1+s_{d})=L \in \mathbb{N}, L \geq 2.
\end{aligned} 
\label{eq:dp_relation}
\end{equation}

\noindent Therefore, the fitness function can be rewritten as $b(i) = b_{0}(1+s_{p})^{i}$. Mutation rates for driver and passenger mutations are $\mu$ and $\nu$, respectively. The death rate is $d$  and we define the total rate to be $\delta(i) = b(i)+\mu+\nu+d$.

Denote the continuous process as $(\mathbf{Z}_{t})_{t\in\mathbb{R}_{+}}$ and the jump chain as $(\mathbf{J}_{n})_{n\in\mathbb{Z}_{+}}$. In this construction, we assume that the parameters of the fitness function of drivers and passengers satisfy condition \eqref{eq:dp_relation} for the reduced process. Therefore, driver mutation pushes the type of the cell $L$-units up, while a passenger mutation pulls it $1$-unit down. 
	
A na\"ive state space might be $\mathbb{S} \simeq \mathbb{Z_{+}}^{\mathbb{Z}}\simeq \mathbb{R}$ which is uncountable. However, we can define the state space in a more parsimonious way. We set
$S_{m}=\{\cdots,0,n_{-m},\cdots,n_{m},0,\cdots \mid n_{k}\in \mathbb{Z}_{+}; -m\leq k\leq m\}$ and let $\mathcal{S}_{\infty}=\cup_{m\geq 0}S_{m}$. Since a countable union of countable sets is countable, the set $S_{\infty}$ is countable. Take an element in $\mathbf{s} \in \mathbb{S}\setminus S_{\infty}$, then it has to have nonzero elements with indices arbitrarily small or arbitrarily large. Due to the proliferation mechanism and initial condition $\mathbf{J}_{0}=\underline{e}_{0}$, $\mathbb{P}(\mathbf{J}_{n} =\mathbf{s}) = 0$ for each $n\geq 0$. As long as the initial population is of finite size, we may discard the states with zero probabilities and the new state space $S_{\infty}$ is countable. 
	
To construct the continuous-time process $\mathbf{Z}_{t}$, we associate a holding time following a $Exp(\lambda_{\mathbf{s}})$ distribution to state $\mathbf{s} = (\cdots,n_{-1},n_{0},n_{1},\cdots)$ where $\lambda_{\mathbf{s}} = \sum_{k\in\mathbb{Z}}n_{k}\delta(k)$. Let $j(n)$ be the time of the $n$th jump. The proliferation mechanism is as follows. After the $n$th jump, there are $(\mathbf{Z}_{j(n)})_{k}$ individuals of type $k$; then a single type $i$ cell will be selected to proliferate, mutate, or die with probability 
	
\begin{align*}
    \frac{(\mathbf{Z}_{j(n)})_{i}\delta(i)}{\sum_{k\in\mathbb{Z}}(\mathbf{Z}_{j(n)})_{k}\delta(k)}.
\end{align*}
	
\noindent The probability is due to the competing exponential random variables argument. The cell will then proliferate, acquire a driver mutation, acquire a passenger mutation, or die with probabilities 

\begin{align*}
    \frac{b(i)}{\delta(i)}, \frac{\mu}{\delta(i)},\frac{\nu}{\delta(i)} \text{, or } \frac{d}{\delta(i)} \text{, respectively.}
\end{align*}
	
Finally, the probability space for the jump chain can be constructed according to \cite{williams1991probability} page $105$ and the space can be extended to support the continuous process. In this space, the probabilities for the embedded branching process $(\mathbf{E}_{n})$ can be extracted.

The probability generating function for the embedded branching process is $\mathbf{G}$ and its coordinates are defined by 

\begin{align*}
    G_{i}(\mathbf{s}) = \frac{d}{\delta(i)} + \frac{\nu}{\delta(i)}s_{i-1}+ \frac{b(i)}{\delta(i)}s_{i}^{2}+ \frac{\mu}{\delta(i)}s_{i+L}.
\end{align*}

\noindent The mean matrix is 

\begin{align*}
    &M_{i,i-1} = \frac{\nu}{\delta(i)}, M_{i,i} = 2\frac{b(i)}{\delta(i)}, M_{i,i+L} = \frac{\mu}{\delta(i)}; \forall i\in\mathbb{Z}.
\end{align*}


\section{Propositions}
\begin{prop}\label{prop1}
$\forall i,j \in \mathbb{Z},\lim_{n\to\infty}[(M^{n})_{i,j}]^{1/n}=2$, which implies $\mathbf{q}\leq \tilde{\mathbf{q}}< \underline{1}$.
\end{prop}
    
\begin{proof}
Recall that the mean matrix $M$ is a matrix whose nonzero entries are

\begin{align*}
    &M_{i,i-1} = \frac{\nu}{\delta(i)}, M_{i,i} = 2\frac{b(i)}{\delta(i)}, M_{i,i+L} = \frac{\mu}{\delta(i)}; \forall i\in\mathbb{Z}.
\end{align*}

\noindent The proof uses notation and results from \cite{seneta2006non}. Since all matrices there are indexed by $i,j \in \mathbb{N}$, we relabel our mean matrix $M$ according to the relabeling function $\pi$ and denote it $\widetilde{M}$. The relabeling function $\pi$ is a mapping from $\mathbb{Z}$ to $\mathbb{Z}_{+}$ such that

\begin{equation}
\begin{aligned}
    \pi(0) &= 0,\pi(1)=1,\cdots, \pi(L) = L, \\
    \pi(-1) &= L+1,\pi(-2) = L+2,\cdots, \pi(-L) = 2L, \\
    \pi(L+1) &= 2L+1,\pi(L+2) = 2L+2,\cdots, \pi(2L) = 3L, \\
    &\quad\vdots
\end{aligned}
\label{eq:relabel}
\end{equation}

\noindent The $n$th truncation (the northwestern $n\times n$ submatrix) of $\widetilde{M}$ is irreducible for $n \geq L+1$. We will denote the $n$th truncation of $\widetilde{M}$ by $_{(n)}\widetilde{M}$. Define $R$ to be the convergence parameter of $\widetilde{M}$ and $r = 1/R$. An explicit calculation of $r$ does not seem feasible; however, the fact that $M_{i,i} \to 2$ as $i \to \infty$ simplifies the matter. For fixed $x\in(1,2)$, there exists $n$  such that $M_{n,n} > x$. Since the matrix is non-negative,  
    
\begin{align*}
    &(M^{k})_{n,n} \geq (M_{n,n})^{k} \geq x^{k} \\
    \Rightarrow &\lim_{k\to\infty}[(M^{k})_{n,n}]^{1/k} \geq x, \forall 1 < x < 2\\
    \Rightarrow &r\geq 2, R \leq \frac{1}{2}. 
\end{align*}
    
\noindent If the convergence parameter of the $n$th truncation of $\widetilde{M}$ is $_{(n)}R$ where $n \geq L+1$, then $1/{_{(n)}R}$ is the Perron-Frobenius eigenvalue of $_{(n)}\widetilde{M}$. By the Perron-Frobenius theorem, we have 
    
\begin{align*}
    \frac{1}{_{(n)}R} \leq \max_{i}\sum_{j}\;_{(n)}\widetilde{M}_{i,j} \leq 2.
\end{align*}
    
\noindent By Theorem $6.8$ in \cite{seneta2006non}, $_{(n)}R \to R$ as $n\to\infty$. Therefore, $\frac{1}{R}\leq 2$ and this implies $R = \frac{1}{2}$ or equivalently, $r = 2$.

Since the $k$th truncation of $\widetilde{M}$ is irreducible for $k\geq L+1$ and $r>1$, we invoke Proposition $4.1$ in \cite{hautphenne2013extinction} and conclude $\tilde{\mathbf{q}}<\underline{1}$, where $\underline{1}$ is the vector $(\cdots,1,1,1,\cdots)^{T}$.
\end{proof}

\noindent \textbf{Remark:} Although $\widetilde{M}$ is irreducible, it does not guarantee the PGF has at most $2$ fixed points. In \cite{bertacchi2014strong}, authors show that in the case of non-strong local survival, it is possible to have $\mathbf{q}<\tilde{\mathbf{q}}<\underline{1}$ with an irreducible mean matrix. Corollary $4.4$ of \cite{bertacchi2020extinction} states that $\sup_{i}\tilde{q}_{i}<1$ is sufficient for $\mathbf{q}=\tilde{\mathbf{q}}$. However, this condition is not satisfied in our model and we cannot conclude $\mathbf{q}=\tilde{\mathbf{q}}$.


\section{Lemmas}
\begin{lemma}\label{lemma1}
    Under the assumption $(d+\nu) > \mu L$, the continuous-time process $\mathbf{Z}_{t}$ is non-explosive.
\end{lemma}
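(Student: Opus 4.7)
The plan is to use a truncation argument that reduces non-explosion to a tail estimate on the maximum type reached, using the assumption $d+\nu>\mu L$ to force this maximum to be finite almost surely at every fixed time. The reason I would not attempt a single Foster--Lyapunov bound on the whole state space is that the per-cell birth rate $b(i)=b_0(1+s_p)^i$ is unbounded in $i$, so no linear functional $V=\sum_i z_i\phi(i)$ can satisfy $\mathcal LV\le cV+K$ on its own. Concretely, for each $K\in\mathbb N$ I would introduce the stopping time $\tau_K=\inf\{t:\sup\{i:Z_i(t)>0\}>K\}$. On $\{t<\tau_K\}$ every per-cell rate is bounded by $\delta(K)=b(K)+\mu+\nu+d$, so the total event rate is at most $\delta(K)\|\mathbf Z_t\|_1$ and the stopped process is dominated by a linear (Yule-type) birth process of constant per-cell intensity $\delta(K)$, which is trivially non-explosive. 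Hence it suffices to prove $\tau_K\uparrow\infty$ almost surely as $K\to\infty$.

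The hypothesis $d+\nu>\mu L$ enters algebraically through the polynomial $p(\rho):=\mu\rho^{L+1}-(\mu+\nu+d)\rho+\nu$, for which $p(1)=-d<0$, $p'(1)=\mu L-\nu-d<0$ under the hypothesis, and $p(\rho)\to+\infty$ as $\rho\to\infty$. Consequently $p$ admits a root $\rho^\star\in(1,\infty)$, and dividing $p(\rho^\star)=0$ by $\rho^\star$ produces the identity
\[
\mu\bigl((\rho^\star)^L-1\bigr)+\nu\bigl((\rho^\star)^{-1}-1\bigr)=d.
\]
This identity is precisely the statement that the auxiliary one-particle Markov chain on $\mathbb Z$ with $+L$ jumps at rate $\mu$, $-1$ jumps at rate $\nu$, and absorption at rate $d$ satisfies $\mathbb P_0(\text{reach level }k\text{ before absorption})=(\rho^\star)^{-k}$. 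The assumption $d+\nu>\mu L$ is exactly what keeps the net drift $\mu L-\nu$ strictly below the absorption rate $d$, giving a genuinely geometric tail rather than the trivial bound $1$.

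To conclude $\tau_K\uparrow\infty$ I would couple the branching dynamics so that the type excursion of each cell---from its creation until its next non-division event---is distributed as the single-particle chain above. A union bound over the random number of cells ever created in $[0,t]$, weighted by the tail $(\rho^\star)^{-K}$, then drives $\mathbb P(\tau_K\le t)\to 0$ as $K\to\infty$. The principal obstacle is exactly this coupling: $M_t=\sup\{i:Z_i(t)>0\}$ is not a Markov process, because divisions of currently-maximal cells inflate the rate at which drivers attempt to advance the record, and the total number of cells born in $[0,t]$ is itself controlled by the very process we are trying to bound. The remedy I would pursue is a bootstrap in which the Yule-domination bound on the stopped process $\mathbf Z_{t\wedge\tau_K}$ feeds the expected number of excursions attempted into the single-particle tail at level $K$, after which sending $K\to\infty$ closes the argument.
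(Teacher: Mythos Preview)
Your truncation-plus-tail strategy is natural, but the bootstrap you sketch does not close, and the place where you invoke the hypothesis $d+\nu>\mu L$ is not where it actually does any work. On the first point: $p(1)=-d<0$ and $p(\rho)\to+\infty$, and $p$ is convex on $(0,\infty)$, so a unique root $\rho^\star>1$ exists for \emph{every} parameter choice with $d>0$; the sign of $p'(1)$ is irrelevant. The single-particle tail is therefore always genuinely geometric, so the hypothesis has not yet entered your argument.

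The real difficulty is quantitative. On $\{t<\tau_K\}$ the per-cell division rate is at most $b(K)=b_0(1+s_p)^K$, so Yule domination produces on the order of $e^{b(K)t}$ expected births by time $t\wedge\tau_K$. Each new cell is born at some type $i\le K$, and the per-excursion probability of reaching level $K$ is only $(\rho^\star)^{-(K-i)}$, which is $O(1)$ when $i$ is near $K$. Even tracking starting types via the Lyapunov function $V(\mathbf z)=\sum_i z_i(\rho^\star)^i$, the generator identity you found gives $\mathcal LV=\sum_i z_i\,b(i)(\rho^\star)^i\le b(K)V$ on $\{t<\tau_K\}$, whence $\mathbb P(\tau_K\le t)\le e^{b(K)t}(\rho^\star)^{-K}$. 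Since $b(K)$ grows like $(1+s_p)^K$ while $(\rho^\star)^{-K}$ decays only geometrically, this bound diverges as $K\to\infty$ for each fixed $t>0$. The Yule count and the excursion tail are fighting on incommensurate scales, and the union bound cannot win.

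The paper avoids this by a comparison that decouples the type dynamics from the division rate entirely. It passes to a dominating process in which death is reinterpreted as an additional passenger mutation (new passenger rate $\xi=\nu+d$, death rate $0$); this process has at least as many cells as the original at every time, so it suffices to show it is non-explosive. In the modified process every lineage is infinite, and along each lineage the type performs a random walk stepping $+L$ at rate $\mu$ and $-1$ at rate $\xi$, with no absorption and no dependence on $b(\cdot)$. The hypothesis $d+\nu>\mu L$ is precisely the negative-drift condition $\mu L<\xi$ for this walk; an exponential-moment bound plus Borel--Cantelli then gives $\limsup_n R_n<\infty$ a.s.\ on each lineage, and a countable union over lineages shows no type trajectory is unbounded. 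Once the supremum $M$ of occupied types is a.s.\ finite, the elementary bound $\sum_n \lambda_{\mathbf J_n}^{-1}\ge (M+\mu+\xi)^{-1}\sum_n N_n^{-1}=\infty$ (using $N_n\le n+1$) finishes. The trick of folding $d$ into $\nu$ is exactly what converts your ``absorption'' into ``drift'', so that control of the maximum type becomes a per-lineage statement independent of how many divisions have occurred---the decoupling your bootstrap is missing.
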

	
\begin{proof}
We compare the process $\mathbf{Z}_{t}$ to a simpler process that is more likely to explode. Consider a new passenger mutations rate that incorporates the death rate $\xi = d+\nu$ and set the death rate of this new process to $0$. For the new process to be non-explosive, it suffices to show 
	
\begin{align*}
    \sum_{n=0}^{\infty} \frac{1}{\lambda_{\mathbf{J}_{n}}} = \infty \text{ with probability } 1,
\end{align*}
	
\noindent where $(\mathbf{J}_{n})$ is the jump chain of the modified process with passenger mutation rate $\xi$. If the new process were a pure birth process, the above sum would be a multiple of the standard harmonic series and diverge to $\infty$ with probability $1$. Under the assumption $\xi > \mu L$, a lineage is more likely to acquire $L$ passenger mutations than acquire one driver mutation. To see this, let us focus on the transition between types along a lineage. We have
	
\begin{align*}
    &\mathbb{P}(\text{Consecutive } L \text{ passenger mutations}) = (\frac{\xi}{\mu+\xi})^{L}\geq (\frac{L}{L+1})^{L} > \frac{1}{L+1} \\
    &\mathbb{P}(1 \text{ driver mutation}) = \frac{\mu}{\mu+\xi} < \frac{1}{L+1}.
\end{align*}
	
Intuitively, this random sum of holding times should diverge to infinity almost surely since the fitness of cells on a lineage tends to decrease along generations. For a rigorous argument, since there are at most countably many lineages and we may enumerate them as $l_{1},l_{2},\cdots$. If the supremum of fitnesses of all lineages across generation is bounded by $M$ from above, we have 
	
\begin{align*}
    \sum_{n=0}^{\infty} \frac{1}{\lambda_{\mathbf{J}_{n}}} \geq \frac{1}{M+\mu+\xi} \sum_{n=0}^{\infty} \frac{1}{N_{n}} \geq \frac{1}{M+\mu+\xi} \sum_{k=1}^{\infty} \frac{1}{k}  = \infty,
\end{align*}
	
\noindent where $N_{n}$ is the population size after the $n$th jump and it is non-decreasing with $n$. Hence, for an explosion to occur, we must have at least one lineage whose fitness is unbounded from above. We now show this event has probability zero. Select a lineage and model the transition of cell type as random walk $(R_{n})$ that increases by $L$ with probability $\frac{\mu}{\mu+\xi}< \frac{1}{L+1}$ and decreases by $1$ with probability $\frac{\xi}{\mu+\xi}> \frac{L}{L+1}$. According to Kolmogorov's zero-one law, $\limsup_{n\to\infty} R_{n}$ and $\liminf_{n\to\infty} R_{n}$ are almost surely constants (see page $88$ in \cite{cinlar2011probability}). By Markov's inequality, for some $\alpha > 0$ that will be specified later, we have 

\begin{equation}
\begin{aligned}
    \mathbb{P}(R_{n} \geq K) = \mathbb{P}(e^{\alpha R_{n}} \geq e^{\alpha K}) \leq \frac{(\frac{\xi}{\mu+\xi}e^{-\alpha}+\frac{\mu}{\mu+\xi}e^{L\alpha})^{n}}{e^{K\alpha}}.
\end{aligned}
\label{eq:cond}
\end{equation}
	
\noindent If there exists $\alpha > 0$ such that $\frac{\xi}{\mu+\xi}e^{-\alpha}+\frac{\mu}{\mu+\xi}e^{L\alpha}< 1$, $\sum_{n=0}^{\infty}\mathbb{P}(R_{n} \geq K) < \infty$. Let $x = e^{\alpha}$, then the desired condition is equivalent to the existence of some $x > 1$ such that
	
\begin{equation}
\begin{aligned}
    \frac{\xi}{\mu+\xi}\frac{1}{x}+\frac{\mu}{\mu+\xi}x^{L}< 1 \iff \frac{\xi}{\mu+\xi}+\frac{\mu}{\mu+\xi}x^{L+1}- x<0.
\end{aligned} 
\label{eq:equiv_cond}
\end{equation}
	
\noindent Observe that the RHS of $\eqref{eq:equiv_cond}$ is equal to $0$ when $x=1$. The derivative of $\frac{\xi}{\mu+\xi}+\frac{\mu}{\mu+\xi}x^{L+1}- x$ is $\frac{\mu}{\mu+\xi}(L+1)x^{L}- 1$, and it is negative for $x$ slightly greater than $1$ since $\frac{\mu}{\mu+\xi} < \frac{1}{L+1}$. Therefore, there exists $\alpha>0$ such that $\frac{\xi}{\mu+\xi}e^{-\alpha}+\frac{\mu}{\mu+\xi}e^{L\alpha}< 1$. The inequality in $\eqref{eq:cond}$ now proves $\mathbb{P}(R_{n}\geq K)$ is summable with respect to $n$. By Borel-Cantelli lemma, we have $\mathbf{1}_{\{R_{n}\geq K\}} \to 0$ almost surely and this implies $\limsup_{n\to\infty}R_{n} < K$ for all $K$. Hence, $\lim_{n\to\infty} R_{n} = -\infty$ and finally
	
\begin{align*}
    \mathbb{P}(\text{At least one lineage has unbounded fitness}) &\leq \sum_{i=1}^{\infty}\mathbb{P}(l_{i} \text{ has unbounded fitness})=0.
\end{align*}

\end{proof}

\noindent\textbf{Corollary:} If $\nu + d > \mu L$, the continuous process $(\mathbf{Z}_{t})$ becomes extinct if and only if the embedded process $(\mathbf{E}_{n})$ becomes extinct.

\begin{proof}
To show the equivalence of extinctions in the continuous and embedded process under non-explosion, let $A_{n}$ denote the event that the embedded process becomes extinct at or before the $n$th generation. That is, $\omega \in A_{n}$ implies $||\mathbf{E}_{n}(\omega)||_{\ell^{1}} = 0$. Since the number of generations $n$ is finite, the number of jumps is also finite. Therefore, $\lim_{t\to\infty}||\mathbf{Z}_{t}(\omega)||_{\ell^{1}} = 0$ for almost every $\omega \in A_{n}$. 

On the other hand, let $B_{t}$ be the event that the continuous-time process becomes extinct at or before time $t \in \mathbb{Q}_{+}=\{q\in\mathbb{Q}\mid q \geq 0\}$. That is, $\omega \in B_{n}$ implies $||\mathbf{Z}_{t}(\omega)||_{\ell^{1}} = 0$. Since $[0,t]$ is a finite interval and the process is non-explosive, the number of jumps in $[0,t]$ is a.s. finite. Hence, the number of generations in $[0,t]$ is also a.s. finite and $\lim_{n\to\infty}||\mathbf{E}_{n}(\omega)||_{\ell^{1}} = 0$ for almost every $\omega \in B_{t}$. 

Finally, let $A = \cup_{n\geq 0}A_{n}$ and $B=\cup_{t\in\mathbb{Q}_{+}}B_{t}$, then 

\begin{align*}
    \mathbb{P}(A) = \mathbb{P}(\cup_{n> 0}A_{n}) =  \mathbb{P}(\cup_{t\in\mathbb{Q}_{+}}B_{t}) = \mathbb{P}(B).
\end{align*}
\end{proof}


\begin{lemma}\label{lemma2}
    Let $g(x) = \mu x^{L+1}-(\mu+\nu+d)x+\nu = 0$; then under condition of non-explosion ($\nu+d > \mu L$) and $d>\nu$, $\lim_{i\to-\infty}\frac{1-q_{i}}{1-q_{i-1}} = \alpha$ where $\alpha$ is the unique real solution to $g(x)=0$ such that $x \geq 1$. As a consequence, $\frac{\nu}{\mu L} < \alpha^{L+1}$.
\end{lemma}
    
\begin{proof}
    The difference equation defining extinction probabilities can be transformed into a perturbed linear system by the following manipulation. Recall that extinction probabilities satisfy
    
    \begin{equation}
    \begin{aligned}
        1-s_{i-1} = \frac{\mu+\nu+d}{\nu}(1-s_{i}) - \frac{\mu}{\nu}(1-s_{i+L}) -\frac{b(i)}{\nu}s_{i}(1-s_{i}), \forall i\in\mathbb{Z}.
    \end{aligned}
    \label{eq:ext_prob_diff_eq}
    \end{equation}
    
    \noindent Define $y_{n} = 1-s_{-n}$, then we obtain the following difference equation.
    
    \begin{equation}
    \begin{aligned}
        y_{n+1} = \left(\frac{\mu+\nu+d}{\nu} - \frac{b(-n)}{\nu}(1-y_{n})\right) y_{n} - \frac{\mu}{\nu}y_{n-L}, \forall n\in\mathbb{Z}.
    \end{aligned}
    \label{eq:ext_prob_diff_eq_y}
    \end{equation}
    
    \noindent Since $y_{n} = 1-q_{-n}$ solves $\eqref{eq:ext_prob_diff_eq_y}$, replacing $1-y_{n}$ by $q_{-n}$ will result in a new system whose solution set contains $(1-q_{-n})_{n=-\infty}^{\infty}$. we obtain a perturbed linear system $\eqref{eq:pert_sys}$ with $(1-q_{-n})_{n\in\mathbb{Z}}$ being one of its solutions. 
    
    \begin{equation}
    \begin{aligned}
        y_{n+1} = \left(\frac{\mu+\nu+d}{\nu} - \frac{b(-n)}{\nu}q_{-n}\right)y_{n} - \frac{\mu}{\nu}y_{n-L}, \forall n\in\mathbb{Z}.
    \end{aligned}
    \label{eq:pert_sys}
    \end{equation}
    
    \noindent Denote $\mathbf{y}_{n} = (y_{n},\cdots,y_{n-L})^{T}$. Then we can rewrite the above relation as 
    
    \begin{align*}
        &\mathbf{y}_{n+1} = (A+R(n))\mathbf{y}_{n}, \\
        &A =
        \begin{pmatrix}
            \frac{\mu+\nu+d}{\nu} &\underline{0}^{T} &-\frac{\mu}{\nu} \\
            1 &\underline{0}^{T} &0 \\
            \underline{0} &I_{L-1} &\underline{0}
        \end{pmatrix},
        \;\;R(n) = \text{diag}(-\frac{b(-n)}{\nu}q_{-n}, \cdots, -\frac{b(-n+L)}{\nu}q_{-n+L}),
    \end{align*}
    
    \noindent where $\underline{0}$ is a $(L-1)-$column vector and $I_{L-1}$ is the $(L-1)\times(L-1)$ identity matrix. The characteristic polynomial of matrix $A$ is a multiple of $f(x) = \nu x^{L+1} - (\mu+\nu+d)x^{L} + \mu$. Since $x=0$ is no a root of $f$, if we define $g(x) = x^{L+1}f(1/x) = \mu x^{L+1} - (\mu+\nu+d)x +\nu$ for $x\in\mathbb{R}$, there is a one-to-one correspondence between roots of $f$ and roots of $g$. That is, $f(x)=0$ if and only if $g(1/x)=0$. Observe that $g(1) = -d < 0$ and $g''(x) > 0$ on $(0,\infty)$. Therefore, there is a unique positive real root of modulus greater than $1$ for $g$ and we denote it as $\alpha$. Note that $\frac{1}{\alpha}$ is the unique real root of $f$ that has modulus less than $1$. To show that roots of $g$ are simple, consider the condition
        
    \begin{align*}
        g'(x) = 0 \Rightarrow x^{L} = \frac{\mu+\nu+d}{(L+1)\mu}.
    \end{align*}
        
    \noindent Substituting $x^{L}$ in $g$, we obtain 
        
    \begin{align*}
        x\frac{\mu+\nu+d}{L+1} - (\mu+\nu+d)x +\nu = 0 \Rightarrow x = \frac{L+1}{L}\frac{\nu}{\mu+\nu+d}.
    \end{align*}
    
    \noindent Hence, $x^{L+1} = \frac{\nu}{\mu L}$ and if $x = (\frac{\nu}{\mu L})^{\frac{1}{L+1}}$ is not a root of $g$, all roots of $g$ are simple and matrix $A$ is diagonalizable. For contradiction, suppose $g((\frac{\nu}{\mu L})^{\frac{1}{L+1}}) = 0$, which implies $(\frac{\nu}{\mu L})^{\frac{1}{L+1}} = \frac{L+1}{L}\frac{\nu}{\mu+\nu+d}$. Under the non-explosion condition $\nu+d > \mu L$, this yields
    
    \begin{align*}
        \mu+\nu+d > \mu(L+1) \Rightarrow\; &\frac{\nu}{\mu+\nu+d} < \frac{\nu}{\mu (L+1)}  \\
        \Rightarrow\; &(\frac{\nu}{\mu+\nu+d}\frac{L+1}{L})^{\frac{1}{L+1}} < (\frac{\nu}{\mu L})^{\frac{1}{L+1}}=\frac{L+1}{L}\frac{\nu}{\mu+\nu+d}  \\
        \Rightarrow\; &1 < \frac{L+1}{L}\frac{\nu}{\mu+\nu+d} \\
        \Rightarrow\; &L\mu + Ld < \nu, \text{ which contradicts } d > \nu.
    \end{align*}
        
    We now express $A$ as $A = T\Lambda T^{-1}$ with $\Lambda$ being a diagonal matrix containing distinct eigenvalues of $A$. Let us choose $N \in\mathbb{N}$, and investigate the asymptotic behavior of the perturbed linear system 
    
    \begin{align*}
        \mathbf{z}_{n+1} = T\mathbf{y}_{n+1} = T(A+R(n))T^{-1}T\mathbf{y}_{n} = (\Lambda +TR(n)T^{-1})\mathbf{z}_{n}, n\geq N.
    \end{align*}
    
    \noindent Since all eigenvalues of $A$ are nonzero and simple and the operator norm of the perturbation $||TR(n)T^{-1}||$ satisfies

    \begin{align*}
        &||TR(n)T^{-1}|| \leq ||T||\cdot||R(n)||\cdot||T^{-1}|| \leq \frac{b(-n+L)}{\nu}||T||\cdot||T^{-1}|| \\
        \Rightarrow &\sum_{n\geq 0}||TR(n)T^{-1}|| < \infty.
    \end{align*}

    \noindent Hence, can invoke Theorem $3.4$ in \cite{bodine2015asymptotic} (Discrete Version of Levinson’s Fundamental Theorem) with $n_{0}=0,K_{1}=K_{2}=1$. For convenience, the statement of the theorem is provided in Section $\ref{Levinson}$. The fundamental matrix for $\mathbf{z}_{n}$ has the following form
    
    \begin{align*}
        (I + o(1)) \Lambda^{n} \text{ as } n \to \infty.
    \end{align*}
    
    \noindent Equivalently, the fundamental matrix for $\mathbf{y}_{n}$ has the following form
        
    \begin{align*}
        &(I + o(1)) T^{-1}\Lambda^{n} \text{ as } n \to \infty, \text{ where }\\
        &T^{-1} = 
        \begin{pmatrix}
            \lambda_{1}^{L} &\cdots &\lambda_{L+1}^{L} \\
            \lambda_{1}^{L-1} &\cdots &\lambda_{L+1}^{L-1} \\
            \vdots & &\vdots \\
            1 &\cdots &1 \\
        \end{pmatrix}.
    \end{align*}
        
    \noindent Since $(y_{n}) = (1-q_{-n})$ solves the perturbed system,

    \begin{align*}
        1-q_{-n} &= (1+o(1))\sum_{k=1}^{L+1}c_{k}\lambda_{k}^{n}.
    \end{align*}
    
    The dominating eigenvalue is real. To prove this, suppose the dominating eigenvalue is a complex number $r\exp(\iota \theta)$ with $\theta \notin \{0,\pi\}$, then $r\exp(-\iota \theta)$ is also a dominating eigenvalue. Hence, as $n\to\infty$ and omit all $n$ such that $\cos(n\theta) = 0$,

    \begin{align*}
        \frac{1-q_{-n}}{r^{n}} \sim C \cdot \cos(n\theta), C \neq 0 \text{ is a constant}.
    \end{align*}
    
    \noindent This is impossible since LHS is always positive and the RHS takes negative values infinitely many times as $n\to\infty$. Similarly, we cannot have $2n$ dominating complex eigenvalues. Hence,
    
    \begin{align*}
        \lim_{n\to\infty}\frac{1-q_{-n-1}}{1-q_{-n}} = \lambda_{k}\in\mathbb{R} \text{ for some } 1\leq k\leq L+1.
    \end{align*}

    \noindent Recall that survival probabilities satisfy 
    
    \begin{align*}
        &1-q_{-n} = \frac{\nu}{\delta(-n)}(1-q_{-n-1}) + \frac{b(-n)}{\delta(-n)}(1-q_{-n}^{2})+\frac{\mu}{\delta(-n)}(1-q_{-n+L}) \\
        \Rightarrow &\frac{1-q_{-n}}{1-q_{-n-1}}=  \frac{\nu}{\delta(-n)} + \frac{b(-n)}{\delta(-n)}\frac{1-q_{-n}^{2}}{(1-q_{-n-1})}+\frac{\mu}{\delta(-n)}\frac{1-q_{-n+L}}{1-q_{-n-1}} \\
        \Rightarrow &\frac{1}{\lambda_{k}} = \frac{\nu}{\mu+\nu+d} + \frac{\mu}{\mu+\nu+d}(\frac{1}{\lambda_{k}})^{L+1}.
    \end{align*}
        
    \noindent By Theorem $\ref{thm1}$, monotonicity of extinction probabilities implies that $\lambda_{k}< 1$. Since $\lambda_{k}$ is the limit of ratios of real numbers, it must be a real number as well. Therefore, $\lambda_{k} = \frac{1}{\alpha}$ and $\alpha$ satisfies $g(\alpha) = 0$. Finally, suppose $\frac{\nu}{\mu L} \geq \alpha^{L+1}$, we arrive at a contradiction in the following way,
	
    \begin{align*}
        \frac{\nu}{\mu L} \geq \alpha^{L+1} = \frac{(\mu+\nu+d)\alpha -\nu}{\mu} \Rightarrow \alpha \leq \frac{L+1}{L}\frac{\nu}{\mu+\nu+d} < \frac{2}{1}\frac{\nu}{\nu+d} <1.
    \end{align*}
\end{proof}

\noindent \textbf{Corollary: } The type transition process $(X_{t})$ is non-explosive.

\begin{proof}
    According to Theorem 2.7.1 in \cite{norris1998markov}, it suffices to show the supremum of rates is bounded. 

    \begin{align*}
    &\begin{cases}
        \lim_{i\to\infty}\delta(i)(1-T_{i,i}) = \lim_{i\to\infty}\mu\frac{1-q_{i+L}}{1-q_{i}}+\nu\frac{1-q_{i-1}}{1-q_{i}} = \mu+\nu \\
        \lim_{i\to-\infty}\delta(i)(1-T_{i,i}) = \mu\alpha^{L}+\nu\alpha^{-1} = \mu+\nu+d
    \end{cases} \\
    \Rightarrow &\sup_{i\in\mathbb{Z}}\{\delta(i)(1-T_{i,i})\} <\infty.
    \end{align*}
\end{proof}

\begin{lemma}\label{lemma3}
    Under driver dominance $(\nu/\mu L\leq  1)$, $\lim_{n\to\infty}\mathbb{E}(Y_{n}) = \infty$ and $\lim_{n\to\infty}\mathbb{E}(b(Y_{n}))\to\infty$. Similarly, $\lim_{t\to\infty}\mathbb{E}(X_{t}) = \infty$ and $\lim_{t\to\infty}\mathbb{E}(b(X_{t}))\to\infty$ due to non-explosion.
\end{lemma}
    
\begin{proof}
    Notice that 

    \begin{align*}
        \frac{J_{i,i+L}}{J_{i,i-1}} = \frac{\mu(1-q_{i+L})}{\nu(1-q_{i-1})} \geq \frac{1}{L}\frac{1-q_{i+L}}{1-q_{i-1}} > \frac{1}{L}.
    \end{align*}

    \noindent Therefore, $J_{i,i+L} > 1/(L+1)$ and $J_{i,i-1}<L/(L+1)$, which implies the expected increment of this random walk is positive for each state. In addition, 

    \begin{align*}
        \lim_{i\to-\infty} J_{i,i+L} \geq \frac{\alpha^{L+1}}{L+\alpha^{L+1}} > \frac{1}{L+1}; \lim_{i\to-\infty} J_{i,i-1} \leq \frac{L}{L+\alpha^{L+1}}< \frac{L}{L+1}.
    \end{align*}

    \noindent This implies for all $S \in \mathbb{Z}$, the infimum of expected increments to the left of $S$ is strictly positive, that is, 

    \begin{align*}
        \inf_{i< S} \{LJ_{i,i+L} - J_{i,i-1}\} > 0.
    \end{align*}

    We prove $\lim_{n\to\infty}\mathbb{E}(Y_{n})=\infty$ by Fatou's lemma. To show $\liminf_{n\to\infty}Y_{n} = \infty$, we impose an absorbing state $S>Y_{0}$ such that every state to the right of $S$ collapses into state $S$. Denote the process with this absorbing barrier as $(Y^{(S)}_{n})$, then 

    \begin{align*}
        \lim_{S\to\infty}\liminf_{n\to\infty}Y^{(S)}_{n} &= \sup_{S>0}\sup_{N \geq 1}\inf_{n\geq N}Y^{(S)}_{n} \\
        &= \sup_{N \geq 1}\sup_{S>0}\inf_{n\geq N}Y^{(S)}_{n} \\
        &\leq \sup_{N \geq 1}\inf_{n\geq N}\sup_{S>0}Y^{(S)}_{n} \\
        &= \sup_{N \geq 1}\inf_{n\geq N}Y_{n} \\
        &= \liminf_{n\to\infty}Y_{n}.
    \end{align*}

    \noindent Since the truncated process $(Y^{(S)}_{n})$ is a random walk with an upper absorbing barrier whose expected increments are greater than a positive constant, $\liminf_{n\to\infty}Y^{(S)}_{n} = S$ almost surely. Consequently, $\liminf_{n\to\infty}Y_{n} = \infty$ and we conclude by Fatou's lemma that $\mathbb{E}(Y_{n})\to\infty$ as $n\to\infty$. By convexity of $b(\cdot)$ and Jensen's inequality, $\mathbb{E}(b(Y_{n})) \to \infty$ as $n\to\infty$.

    Notice that $Y_{j(n)} = X_{n}$ where $j(n)$ is the $n$th jump time of the continuous process. By non-explosion, $j(n) \to\infty$ as $n\to\infty$, which implies 

    \begin{align*}
        \lim_{t\to\infty}\mathbb{E}(Y_{t}) = \infty \text{ and } \lim_{t\to\infty}\mathbb{E}(b(Y_{t})) = \infty.
    \end{align*}

\end{proof}


\section{Theorems}
    \begin{theorem}\label{thm1}
        Under the condition of non-explosion in Lemma $\ref{lemma1}$, $i> j$ implies $q_{i}\leq q_{j}$, which further implies $\lim_{i\to \infty}q_{i} = 0$ and $\lim_{i\to-\infty}q_{i} = 1$.
    \end{theorem}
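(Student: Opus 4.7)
My plan is to reduce the claim to monotonicity of continuous-time extinction probabilities and then prove that monotonicity by an explicit coupling of two continuous-time processes with different initial types. By the corollary to Lemma~\ref{lemma1}, the non-explosion hypothesis $\nu+d>\mu L$ implies that the continuous process $(\mathbf{Z}_t)$ is extinct if and only if the embedded process $(\mathbf{E}_n)$ is, so monotonicity for one transfers to the other. The two limit identities $\lim_{i\to\infty}q_i=0$ and $\lim_{i\to-\infty}q_i=1$ follow from monotonicity and boundedness by taking limits in the PGF fixed-point equation---a computation already written out in the main text just before the theorem statement---so the real content is the monotonicity $i>j\Rightarrow q_i\le q_j$.

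Fix $i>j$ and build the two continuous-time processes $\mathbf{Z}^{(i)}$ and $\mathbf{Z}^{(j)}$, started from single cells of types $i$ and $j$, on a single probability space together with an injective pairing that assigns to every cell of $\mathbf{Z}^{(j)}$ a cell of $\mathbf{Z}^{(i)}$ whose type is exactly $i-j$ larger. Each pair carries its own independent collection of clocks: four \emph{shared} exponential alarms---death at rate $d$, passenger at rate $\nu$, driver at rate $\mu$, and division at the smaller of the two birth rates, namely $b(k)$ where $k$ is the $\mathbf{Z}^{(j)}$-partner's current type---plus one \emph{extra-division} alarm at rate $b(k+i-j)-b(k)>0$ attached only to the $\mathbf{Z}^{(i)}$-partner. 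When a shared alarm fires both partners react simultaneously (both die, both decrement type by $1$, both increment type by $L$, or both divide into two cells which are then re-paired pairwise across processes), preserving the type offset $i-j$ and the injectivity of the pairing; when the extra-division alarm fires only the $\mathbf{Z}^{(i)}$-cell splits, one child staying with the existing partner and the other becoming an unpaired cell that together with its entire descendant subtree thereafter evolves independently, using its own clocks under the correct type-$i$ dynamics.

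Summing rates pair by pair shows that each process has the right marginal law: every $\mathbf{Z}^{(j)}$-cell of type $k$ has total rate $d+\nu+\mu+b(k)=\delta(k)$, and every $\mathbf{Z}^{(i)}$-cell of type $k+i-j$ has total rate $d+\nu+\mu+b(k)+(b(k+i-j)-b(k))=\delta(k+i-j)$, with independent branching for unpaired subtrees. At the same time the pairing invariant forces $|\mathbf{Z}^{(j)}_t|\le|\mathbf{Z}^{(i)}_t|$ for every $t\ge 0$, and since extinction of a branching process is the event that the cell count is eventually (and forever) $0$, we obtain $\{\mathbf{Z}^{(i)}\text{ extinct}\}\subseteq\{\mathbf{Z}^{(j)}\text{ extinct}\}$ and therefore $q_i\le q_j$. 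The main obstacle I anticipate is the precise bookkeeping of pair updates at a shared division event---explicitly assigning the two $\mathbf{Z}^{(i)}$-children to the two $\mathbf{Z}^{(j)}$-children so that the type-offset invariant persists---and checking that the unpaired-subtree excursions in $\mathbf{Z}^{(i)}$ do not disturb the marginal law of $\mathbf{Z}^{(i)}$; both are routine once the invariants are clearly stated.
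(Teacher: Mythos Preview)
Your proposal is correct and follows essentially the same route as the paper: both proofs reduce to the continuous-time process via the corollary to Lemma~\ref{lemma1}, then couple $\mathbf{Z}^{(i)}$ and $\mathbf{Z}^{(j)}$ by sharing the death, driver, passenger, and lower-rate division clocks while giving the higher-type cell an extra division clock at rate $b(k+i-j)-b(k)$, so that paired cells maintain the fixed type offset and $\|\mathbf{Z}^{(j)}_t\|\le\|\mathbf{Z}^{(i)}_t\|$ for all $t$. Your write-up is, if anything, more explicit than the paper's about the pairing invariant and the bookkeeping at division events; the only detail the paper adds that you defer is ruling out $q_\infty=1$ by invoking the earlier result $\mathbf{q}<\underline{1}$.
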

    
    \begin{proof}
        Given the continuous-time process $(\mathbf{Z}_{t})$ is non-explosive, its extinction probability for the continuous-time process $(\mathbf{Z}_{t})$ is the same as that of the discrete embedded branching process $(\mathbf{E}_{n})$ by corollary of Lemma \ref{lemma1}. We use a coupling argument to show the monotonicity of the extinction probabilities. 
        
        Let us fix $i > j$ and construct two continuous-time processes representing cancer populations with different initial types in the following way. Let superscript indicate the initial cell type, that is, $\mathbf{Z}^{(j)}_0 = j$ and $\mathbf{Z}^{(i)}_{0}=i$. The coupling of the process is described as follows. 
        
        At time $0$, five exponentially distributed variables, $B,U,V,D$, and $S$ are competing with each other with rates $b(j),\mu,\nu,d$ and $b(i)-b(j)$, respectively for $(\mathbf{Z}^{(i)}_{t})$ and the minimum of five random variables decides the actions of the initial type $i$ cell. Similarly, there are four competing exponentially distributed random variables $B',U',V'$, and $D'$ with rates $b(j),\mu,\nu$ and $d$ for $(\mathbf{Z}^{(j)}_{t})$ at time $0$. Actions of two initial cells are coupled by setting $B = B', U = U', V=V'$, and $D = D'$. That is, 
        
        \begin{itemize}
            \item If the minimum is $B$, both initial type $i$ cell and initial type $j$ cell proliferate. 
            \item If the minimum is $U$, both initial type $i$ cell and initial type $j$ cell acquire a driver mutation.
            \item If the minimum is $V$, both initial type $i$ cell and initial type $j$ cell acquire a passenger mutation.
            \item If the minimum is $D$, both initial type $i$ cell and initial type $j$ cell die.
            \item If the minimum is $S$, the initial type $i$ cell proliferates and the initial type $j$ cell takes no actions. 
        \end{itemize}
        
        \noindent We further couple cells from two populations after the first action.
        
        \begin{itemize}
            \item If both initial cells proliferate, two new couples are formed, $(i,j)$ and $(i,j)$. 
            \item If both initial cells acquire a driver mutation, the type $i+L$ cell is coupled with the type $j+L$ cell, forming a $(i+L,j+L)$ couple.
            \item If both initial cells acquire a passenger mutation, the type $i-1$ cell is coupled with the type $j-1$ cell, forming a $(i-1,j-1)$ couple.
            \item If both cells die, there will be no new couples.
            \item If the initial type $i$ cell proliferates while the initial type $j$ cell takes no action, one type $i$ cell is coupled with the initial type $j$ cell, forming a $(i,j)$ couple.  
        \end{itemize}
        
        \noindent Notice that in the last scenario, the uncoupled type $i$ cell will evolve (proliferate, mutate, or die) freely, independent of the $(\mathbf{Z}^{(j)}_{t})$ population. Therefore, if we continue this construction, we have $||\mathbf{Z}^{(i)}_{t}||_{\ell^{1}} \geq ||\mathbf{Z}^{(j)}_{t}||_{\ell^{1}}$ for all $t \geq 0$. This implies $q_{i} \leq q_{j}$ and it then follows that $\lim_{i\to\infty}q_{i}=q_{\infty}$ as well as $\lim_{i\to-\infty}q_{i}=q_{-\infty}$ exist and 
        
        \begin{align*}
            &q_{\infty} = q_{\infty}^{2} \Rightarrow q_{\infty} \in \{0,1\} \\
            &q_{-\infty} = \frac{d}{\mu+\nu+d}+\frac{\nu}{\mu+\nu+d}q_{-\infty}+\frac{\mu}{\mu+\nu+d}q_{-\infty} \Rightarrow q_{-\infty} = 1.
        \end{align*}
        
        \noindent If $q_{\infty} = 1$, we have $q_{i} = 0$ for all $i\in\mathbb{Z}$ and this contradicts the fact that all extinction probabilities are strictly less than one. Therefore, $\lim_{i\to\infty}q_{i} = 0$. 
    \end{proof}


\begin{theorem}\label{thm2}
    When $\frac{\nu}{\mu L} \leq 1$, $(Y_{n})$ and $(X_{t})$ are transient.
\end{theorem}
    
\begin{proof}
    It suffices to show that there exists a $1-$subinvariant measure that is not invariant by Theorem 5.4 in \cite{seneta2006non}. We claim that $\mathbf{x} = (\delta(n)(1-q_{n})(1-T_{n,n}))_{n\in\mathbb{Z}}$ is the desired measure.
        
    \begin{align*}
        \mathbf{x}^{T}J \lneqq \mathbf{x}^{T} \iff &\mu(1-q_{i})+\nu(1-q_{i}) \leq \delta(i)(1-q_{i})(1-T_{i,i}) \\
        \iff &\frac{\mu}{\delta(i)}+\frac{b(i)}{\delta(i)}\frac{1-q_{i}^{2}}{1-q_{i}}+\frac{\nu}{\delta(i)} \leq 1 \\
        \iff &b(i)q_{i} \leq d \\
        \iff &q_{i} \leq \frac{d}{b(i)}.
    \end{align*}
	    
    \noindent Equality cannot hold for all $i$ since $\frac{d}{b(i)} > 1$ for small $i$. Define $\mathbf{q}'= (\min(1,\frac{d}{b(i)}))_{i=-\infty}^{\infty}$. To show $q_{i} \leq \frac{d}{b(i)}$, is suffices to show $\mathbf{q}' \geq \mathbf{G}(\mathbf{q}')$. To see this, notice that
	    
    \begin{align*}
        &\mathbf{q}' \geq \mathbf{G}(\mathbf{q}') \Rightarrow \mathbf{G}(\mathbf{q}') \geq \mathbf{G}^{\circ 2}(\mathbf{q}') \Rightarrow \cdots \Rightarrow \mathbf{G}^{\circ n}(\mathbf{q}') \geq \mathbf{G}^{\circ (n+1)}(\mathbf{q}') \\
        \Rightarrow&\;\mathbf{q}' \geq \lim_{n\to\infty}\mathbf{G}^{\circ n}(\mathbf{q}') \;\geq\; \mathbf{q}, \text{ since } \mathbf{q} \text{ is the minimal fixed point,}
    \end{align*}
    
    \noindent and $\mathbf{G}^{\circ n}$ is the $n$-fold composition of $\mathbf{G}$ with itself. Now, we verify $\mathbf{q}' \geq \mathbf{G}(\mathbf{q}')$. When all three quantities ($q'_{i-1},q'_{i},q'_{i+L}$) are less than or equal to $1$,
	    
    \begin{align*}
        \mathbf{q}' \geq \mathbf{G}(\mathbf{q}') 
        \iff &\frac{d}{b(i)} \geq \frac{d}{\delta(i)}+\frac{\nu}{\delta(i)}\frac{d}{b(i-1)}+\frac{b(i)}{\delta(i)}(\frac{d}{b(i)})^{2}+\frac{\mu}{\delta(i)}\frac{d}{b(i+L)} \\
        \iff &d\delta(i) \geq db(i) + \nu d(1+s_{p}) + d + \mu d(1+s_{p})^{-L} \\
        \iff &\mu (1-(1+s_{p})^{-L}) \geq \nu s_{p}.
    \end{align*}
	    
    \noindent Since $(1+s_{p})^{-L} \geq 1-Ls_{p}$, we have 
	    
    \begin{align*}
        \mu L \geq \nu \Rightarrow &\mu Ls_{p} \geq \nu s_{p} \\
        \Rightarrow &\mu Ls_{p} \geq \nu s_{p} \\
        \Rightarrow &\mu (1-(1+s_{p})^{-L}) \geq \nu s_{p}.
    \end{align*}
	    
    \noindent If $\frac{d}{b(i-1)} > 1$ and $\frac{d}{b(i)} \leq 1$, by the same computation, we have
	    
    \begin{align*}
        \frac{d}{b(i)} &\geq \frac{d}{\delta(i)}+\frac{\nu}{\delta(i)}\frac{d}{b(i-1)}+\frac{b(i)}{\delta(i)}(\frac{d}{b(i)})^{2}+\frac{\mu}{\delta(i)}\frac{d}{b(i+L)} \\
	&\geq \frac{d}{\delta(i)} +\frac{\nu}{\delta(i)}+\frac{b(i)}{\delta(i)}(\frac{d}{b(i)})^{2}+\frac{\mu}{\delta(i)}\frac{d}{b(i+L)} \\
        &=\frac{d}{\delta(i)} +\frac{\nu}{\delta(i)}q_{i-1}'+\frac{b(i)}{\delta(i)}q_{i}'^{2}+\frac{\mu}{\delta(i)}q_{i+L}'.
    \end{align*}
	    
    \noindent If $\frac{d}{b(i-1)} > 1$ and $\frac{d}{b(i)} > 1$, $q'_{i} = q'_{i-1} = 1$ and the inequality holds trivially. Since the process is non-explosive, $(Y_{n})$ is transient implies $(X_{t})$ is transient as well (Theorem $3.4.1$ in $\cite{norris1998markov}$).
\end{proof}


\begin{theorem}\label{thm3}
    When $\frac{\nu}{\mu L} > 1$, the jump chain $(Y_{n})$ is positive recurrent. Consequently, $(X_{t})$ admits a limiting distribution.
\end{theorem}
    
\begin{proof}
    It is clear that the jump chain is irreducible with period $L+1$. To show it is positive recurrent, we prove by contradiction. Suppose the jump chain is transient or null recurrent, then by Theorem $1.8.5$ in \cite{norris1998markov}, for any $x \in \mathbb{Z}_{+}$,  
        
    \begin{align*}
         \lim_{n\to\infty} \mathbb{P}(|Y_{n}|>x)=1-\lim_{n\to\infty} \mathbb{P}(|Y_{n}|\leq x)= 1-\lim_{n\to\infty}\sum_{k=-x}^{x}\mathbb{P}(|Y_{n}|= k) = 1.
    \end{align*}
        
    \noindent To show it is not the case, we bound the right and left tail probabilities $(i\to\infty \text{ and } i\to-\infty)$. 
    
    Observe that as $i\to\infty$, the jump chain tends to a random walk with 
        
    \begin{align*}
        p_{i,i-1} = \frac{\nu}{\mu+\nu}>\frac{L}{L+1}, p_{i,i+L} = \frac{\mu}{\mu+\nu}<\frac{1}{L+1}.
    \end{align*}
        
    \noindent Therefore, there exists $I_{r}>0$ such that for all $i \geq I_{r}$, $p_{i,i-1} > q \in (\frac{L}{L+1},\frac{\nu}{\mu+\nu})$ and $p_{i,i+L} < p \in (\frac{\mu}{\mu+\nu},\frac{1}{L+1})$ and $p+q = 1$. Let $(R_{n})$ be a random walk with a retaining barrier starting from $R_{0}=I_{r}$ and for $i\geq I_{r}$,
        
    \begin{align*}
        p_{i,\max\{I_{r},i-1\}} = q, p_{i,i+L} = p.
    \end{align*}
        
    \noindent This random walk is more likely to increase than the original process for $i \geq I_{r}$. Since this random walk is more likely to decrease by $L$ than increase by $L$, $(R_{n})$ is positive recurrent. To see this,
        
    \begin{align*}
        q^{L} \geq (\frac{L}{L+1})^{L} > \frac{1}{L+1} \geq p.
    \end{align*}
        
    As already stated, by Lemma $\ref{lemma2}$, as $i \to -\infty$, the jump chain approaches a random walk with 
        
    \begin{align*}
        p_{i, i-1} = \frac{\nu}{\mu+\nu+d}\alpha^{-1} < \frac{L}{L+1}, p_{i,i+L} = \frac{\mu}{\mu+\nu+d}\alpha^{L} > \frac{1}{L+1}.
    \end{align*}
        
    \noindent Again, there exists $I_{l}<0$ such that for all $j\leq I_{l}$, $p_{j, j-1} < q \in (\frac{\nu}{\mu+\nu+d}\alpha^{-1},\frac{L}{L+1})$ and $p_{j,j+L} > p \in (\frac{1}{L+1},\frac{\mu}{\mu+\nu+d}\alpha^{L})$ and $p+q = 1$. Let $(L_{n})$ be a random walk starting from $I_{l}$ with a retaining barrier on the right, and take $q \in (\frac{\nu}{\mu+\nu+d}\alpha^{-1},\frac{L}{L+1}), p \in (\frac{1}{L+1},\frac{\mu}{\mu+\nu+d}\alpha^{L})$ such that for all $j \leq I_{l}$,
        
    \begin{align*}
        p_{j,j-1} = q, p_{j,\min\{I_{l},j+L\}} = p.
    \end{align*}
        
    \noindent This is a left-continuous random walk with a retaining barrier whose expected increment $Lp-q$ is greater than $0$. Therefore, by \cite{spitzer2001principles} Case $(i)$ on page $188$ and $P1$ on page $191$, the process is positive-recurrent. 
        
    Fix $\epsilon \in (0,1)$; then there exist $u>I_{r}>0$ and $0>I_{l}>v$ such that 
        
    \begin{align*}
        \lim_{n\to\infty}\mathbb{P}(R_{n}>u) < \frac{\epsilon}{2} \text{ and } \lim_{n\to\infty}\mathbb{P}(L_{n}<v) < \frac{\epsilon}{2}.
    \end{align*}
        
    \noindent Since $(R_{n})$ (resp. $(L_{n})$) has a heavier right (resp. left) tail than the jump chain, 
        
    \begin{align*}
        \limsup_{n\to\infty} \mathbb{P}(Y_{n}>u) < \frac{\epsilon}{2} \text{ and } \limsup_{n\to\infty} \mathbb{P}(Y_{n}<v) < \frac{\epsilon}{2}.
    \end{align*}
        
    \noindent Therefore, 
        
    \begin{align*}
        \limsup_{n\to\infty} \mathbb{P}(|Y_{n}|>\max\{u,|v|\}) \leq \limsup_{n\to\infty} \mathbb{P}(R_{n}>u) + \limsup_{n\to\infty} \mathbb{P}(L_{n}<v) < \epsilon<1.
    \end{align*}
        
    \noindent This is a contradiction, so the jump chain must be positive recurrent. Consequently, it admits a unique invariant distribution $\mathbf{y}$. That is, for all $i\in\mathbb{Z}$,
        
    \begin{align*}
        y_{i-L}J_{i-L,i} + y_{i+1}J_{i+1,i} = y_{i} \text{ and } \sum_{i\in\mathbb{Z}} y_{i} = 1.
    \end{align*}

    Recall that the holding time for state $i$ follows an exponential distribution with $\delta(i)(1-T_{i,i})$, and we have 

    \begin{align*}
    &\begin{cases}
        \lim_{i\to\infty}\delta(i)(1-T_{i,i}) = \lim_{i\to\infty}\mu\frac{1-q_{i+L}}{1-q_{i}}+\nu\frac{1-q_{i-1}}{1-q_{i}} = \mu+\nu \\
        \lim_{i\to-\infty}\delta(i)(1-T_{i,i}) = \mu\alpha^{L}+\nu\alpha^{-1} = \mu+\nu+d
    \end{cases} \\
    \Rightarrow &\inf_{i\in\mathbb{Z}}\{\delta(i)(1-T_{i,i})\} > 0.
    \end{align*}

    \noindent Since the infimum of the rates is strictly greater than $0$, $y_{i}/(\delta(i)(1-T_{i,i}))$ a summable invariant measure for the generator $Q$, which concludes that $(X_{t})$ admits a limiting distribution (Theorem $3.6.2$ of \cite{norris1998markov}).  

\end{proof}

\section{Discrete Version of Levinson’s Fundamental Theorem} \label{Levinson}

Let $y(n)\in \mathbb{C}^{p}$ and $y(n+1) = [\Lambda(n)+R(n)]y(n)$, where $\Lambda(n) = \text{diag}(\lambda_{1}(n),\cdots,\lambda_{p}(n))$. Suppose $\Lambda(n)$ is invertible for $n\geq n_{0}$ and following conditions are satisfied. Since any two matrix norms are equivalent, we use the operator norm, denoted $||\cdot||$.

\begin{align*}
    &\exists K_{1},K_{2} \text{ s.t. } \forall i\neq j \text{ either }
    \begin{cases}
    &[\prod_{k=n_{0}}^{n}|\frac{\lambda_{j}(k)}{\lambda_{i}(k)}| \to 0 \text{ as } n\to\infty \text{ and } \\
    &\qquad \prod_{k=n_{n_{1}}}^{n_{2}}|\frac{\lambda_{j}(k)}{\lambda_{i}(k)}| \leq K_{1}, \forall n_{0}\leq n_{1}\leq n_{2}] \\
    &\text{or } \prod_{k=n_{n_{1}}}^{n_{2}}|\frac{\lambda_{j}(k)}{\lambda_{i}(k)}| \geq K_{2}, \forall n_{0}\leq n_{1}\leq n_{2},
    \end{cases} \\
    &\text{and }\sup_{1\leq i\leq p}\sum_{n=n_{0}}^{\infty}\frac{||R(n)||}{|\lambda_{i}(n)|} < \infty.
\end{align*}

\noindent Then, the perturbed linear system $y(n+1) = [\Lambda(n)+R(n)]y(n)$ has a fundamental matrix satisfying 

\begin{align*}
    Y(n) = (I + o(1))\prod_{k=n_{0}}^{n-1}\Lambda(k) \text{ as } n\to\infty.
\end{align*}

\noindent Columns of $Y(n)$ are linearly independent solution vectors such that $y(n) = Y(n)y(0)$.

\clearpage

\bibliography{mybib}

\end{document}